\documentclass[10pt, conference]{IEEEtran}

\usepackage{epsfig}
\usepackage{subfigure}
\usepackage{graphicx}

\usepackage{amsmath}
\usepackage{amssymb}
\usepackage{color}
\usepackage{cite}

\def\BibTeX{{\rm B\kern-.05em{\sc i\kern-.025em b}\kern-.08em
    T\kern-.1667em\lower.7ex\hbox{E}\kern-.125emX}}

\newtheorem{theorem}{Theorem}

\newtheorem{remark}{Remark}

\title{Information Theory of Matrix Completion}

\author{
\authorblockN{Changho Suh}
\authorblockA{KAIST, Daejeon, South Korea \\
Email: $\mathsf{chsuh@kaist.ac.kr}$}
}

\begin{document}

\IEEEoverridecommandlockouts

\maketitle

\begin{abstract}
\emph{Matrix completion} is a fundamental problem that comes up in a variety of applications like the Netflix problem, collaborative filtering, computer vision, and crowdsourcing. The goal of the problem is to recover a $k$-by-$n$ unknown matrix from a subset of its noiseless (or noisy) entries. We define an information-theoretic notion of \emph{completion capacity} $C$ that quantifies the maximum number of entries that one observation of an entry can resolve. This number provides the minimum number $m$ of entries required for reliable reconstruction: $m=\frac{kn}{C}$. Translating the problem into a distributed joint source-channel coding problem with encoder restriction, we characterize the completion capacity for a wide class of \emph{stochastic} models of the unknown matrix and the observation process. Our achievability proof is inspired by that of the Slepian-Wolf theorem. For an arbitrary stochastic matrix, we derive an upper bound on the completion capacity.


\end{abstract}

\section{Introduction}


Matrix completion has received considerable attention with applications in the Netflix problem~\cite{netflix}, collaborative filtering~\cite{CollaborativeFilter}, computer vision~\cite{ComputerVision}, and crowdsourcing~\cite{Oh:OR14}.
The objective of the problem is to reconstruct a $k$-by-$n$ unknown matrix from a given subset of noiselessly (or noisily) observed entries. Clearly completely irrelevant entries, if not observed, have no chance to be recovered, since other available entries cannot provide any information about the missing ones. In many applications, however, there can be correlation between entries. For instance, in the Netflix problem, movie ratings of a subscriber, comprising the elements of a single column or row, could play a role to infer ratings of other subscribers having similar tastes in movies. So a natural question that arises in this context is: what is the minimum number of entries needed for reconstruction of the unknown matrix with possibly correlated entries?

There has been a proliferation of reconstruction algorithms~\cite{CandesRecht,CandesTao:it,KMO:it,Recht} which may suggest an answer. Under the low-rank assumption of the matrix, Candes-Recht~\cite{CandesRecht} made use of the nuclear-norm-based optimization framework to provide a bound on the minimum number of entries required for reconstruction. This bound was further improved by Candes-Tao~\cite{CandesTao:it}, Keshavan-Montanari-Oh~\cite{KMO:it}, and Recht~\cite{Recht}. However, the best known bound thus far still comes with a numerical constant and/or logarithmic factor gap to the optimal performance, even under the low rank assumption;
hence the question raised above has not been answered completely. The challenge might come from the matrix model and the \emph{worst-case} analysis approach that these algorithms have pursued. The algorithms consider a \emph{deterministic} matrix and strive to guarantee reconstruction for \emph{all possible instances} of matrix entries. Moreover, the design of these algorithms is based primarily on computational considerations which might give challenge in finding the fundamental limits that can be achieved without any complexity restriction.

In an effort to make progress towards answering the question, we invoke Shannon's idea in~\cite{Shannon:IT}. In the paper, Shannon introduced a \emph{probabilistic} view to communication systems to characterize the maximum amount of information that can be reliably transmitted over a channel. Specifically Shannon viewed the information source and the channel as \emph{random} quantities that can be modeled via probability distributions. Given the stochastic source and channel, he then established the maximum rate of communication as the ratio of the channel capacity to the source entropy rate.

We advocate Shannon's approach for our problem. We model the unknown matrix and the observation process as \emph{stochastic} processes, and seek to find the maximum number of matrix entries that one noiseless (or noisy) observation can reveal on the average. We define the quantity as \emph{completion capacity} $C$. This number can provide an answer to the question that we raised before. The minimum number $m$ of entries needed for reliable reconstruction of a matrix is the ratio of the total number of matrix entries to the completion capacity: $m=\frac{kn}{C}$.

In this work, we characterize the completion capacity for a wide class of stochastic models of the unknown matrix and the observation process. Drawing parallels between the observation process and the communication channel, we translate the matrix completion problem into a distributed joint source-channel coding problem with encoder restriction.
We then invoke the achievability idea of the Slepian-Wolf theorem~\cite{SlepianWolf:it73}, established for a similar yet different setting, to develop a reconstruction algorithm. We show the optimality of this algorithm for
the case in which 
 a sequence of column vectors (or row vectors) of the unknown matrix is a \emph{stationary ergodic} process.
 This is the only assumption that we make on the statistics of the matrix: we consider \emph{arbitrary} distributions for the entries within each column (or row). The optimality is shown also for arbitrary observation processes which include the \emph{noisy} observation case as well.
For an arbitrary stochastic matrix, we develop an upper bound on the completion capacity.

\emph{Related Work:}
The denoising problem~\cite{Donoho95Denoising,Donoho95waveletshrinkage, UniveralDenoising:it05} bears similarity to our translated problem in that it intends to estimate an original signal from a noisy version.
Especially the problem setting in~\cite{UniveralDenoising:it05} is intimately related to ours since~\cite{UniveralDenoising:it05} takes a stochastic view on the original signal and the noisy channel as we do here.
However, the problem
does not focus on the setting in which only a few entries are observed.
Hence the performance metric of our interest, the minimum number of entries for reconstruction, was not investigated therein. On the other hand, Motahari-Bresler-Tse~\cite{MotahariBreslerTse:it} recently applied Shannon's approach that we take here to the DNA sequencing problem.
Shannon's idea together with this recent work gave inspiration to our work.

%



\section{Problem Formulation}
\label{sec:model}

\begin{figure}[t]
\begin{center}
{\epsfig{figure=./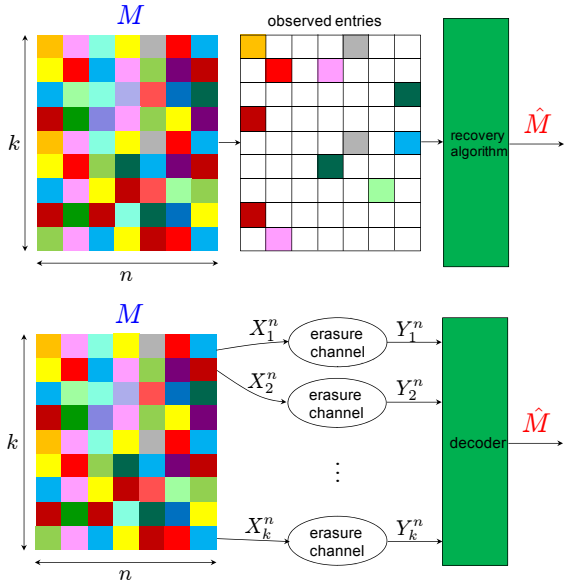, angle=0, width=0.41\textwidth}}
\end{center}
\caption{(Top): The matrix completion problem; (Bottom): A translated joint source-channel coding problem with  encoder restriction.}
\label{fig:framework}
\end{figure}

The problem of matrix completion is illustrated in the top of Fig.~\ref{fig:framework}.
Given partially observed entries of a $k$-by-$n$ unknown matrix $M$, the task of a recovery algorithm is to  decode $\hat{M}$. Here entries are assumed to be sampled uniformly at random. In this work, we translate the problem into a communication problem with encoder restriction. 
Specifically we model the observation process via an erasure channel with an erasure probability $1-p$, where  $p:=\frac{m}{kn}$ indicates \emph{observation rate} and $m$ denotes the number of observed entries. Note that the erasure channel is memoryless across entries due to the random sample assumption. Next we view $M$ as an information source that we wish to transmit. The problem can then be regarded as a point-to-point communication problem.

Here one noticeable distinction as compared to the conventional communication setting is that the information source $M$ cannot be processed (encoded) before transmission. In an effort to reflect this encoder restriction, we further reduce the problem into a \emph{distributed} joint source-channel coding problem in which joint processing across different transmitters is not allowed. In this setting, we are forced to somehow respect the encoder constraint, but not fully. Even in this setting, the encoder constraint is still applied to each transmitter.

See the bottom of Fig.~\ref{fig:framework} for the translated problem. Let $X_{\ell}^n$ be the $\ell$th row (or column) of the unknown matrix $M$ where $\ell \in [1:k]:=\{1,\cdots,k \}$. Without loss of generality, assume that $X_{\ell}^n$ denotes row components. Here we use shorthand notation to indicate the sequence up to $n$:  $X_{\ell}^n :=(X_{\ell 1}, \cdots, X_{ \ell n})$. The $(\ell,i)$ entry of the matrix, $X_{\ell i}$, is assumed to take values in a finite alphabet set. Since the observation process is modeled via an erasure channel, the channel output w.r.t. the $(\ell,i)$ entry is
\begin{align}
Y_{\ell i} = \left\{
               \begin{array}{ll}
                 X_{\ell i}, & \hbox{w.p. $p$;} \\
                 e, & \hbox{o.w.,}
               \end{array}
             \right.
\end{align}
where $\ell \in [1:k]$ and $i \in [1:n]$. Here the observation process is assumed to be noiseless. The noisy observation case will be dealt with in Section~\ref{sec:noisychannel}.
Decoder uses a function to estimate the unknown matrix $M$ from $\{ Y_{\ell}^n\}_{\ell=1}^{k}$.
An error occurs whenever $\hat{M} \neq M$. The average probability of error is given by
$\lambda = \mathbb{E} [ \textrm{Pr} (\hat{M} \neq M ) ]$.

We say that the \emph{completion rate} $R=\frac{kn}{m}= \frac{1}{p}$ is achievable if there exists a decoder function such that the average decoding error probability of $\lambda$ goes to zero as  $n$ tends to infinity. The \emph{completion capacity} $C$ is the supremum of the achievable completion rates. Note that this notion quantifies the maximum number of matrix entries that one non-erased observation can resolve on the average. With this notion $C$, the minimum number of entries required to reconstruct a $k$-by-$n$ matrix can be expressed as $\frac{kn}{C}$.

\begin{remark}
Since rows and columns are interchangeable  with a slight modification of notation, the completion capacity can be defined in the limit of $k$ alternatively. This asymptotic-regime analysis can be useful in practice as well since in many applications the matrix dimension admits a high dimensional setting, e.g., in the Netflix problem~\cite{netflix}, $(k,n) \approx ( 5 \cdot 10^5, 2 \cdot 10^4)$.
$\square$
\end{remark}

\begin{remark}
\label{remark:distributedsetting}
Our problem setting with multiple transmitters comes with another benefit. Since it matches the two-dimensional structure of the matrix, it helps to well capture the stochastic property of the matrix. In general, the statistics of the unknown matrix is governed by an arbitrary joint distribution. 
For some applications, however, the matrix may possess a column (or row) dependent stochastic property. For instance, consider a setting in the Netflix problem where the column index indicates movies of \emph{different} genres and the row index denotes subscribers with \emph{similar} tastes. The different genres may render movie ratings less correlated across columns. In the extreme case where the ratings are independent, we may have $p(x_{11},\cdots, x_{kn}) \approx \prod_{i=1}^n p(x_{1i},\cdots, x_{ki})$. On the other hand, ratings across rows may be strongly dependent as they are affected by similar tastes of subscribers. In an extreme case, $p(x_{\ell 1}, \cdots, x_{\ell n})$ may be the same for all $ \ell \in [1:k]$. The two-dimensional structure of our setting enables us to reflect this stochastic structure in a convenient manner. It turns out this makes analysis more tractable and simpler rather than in the point-to-point setting. This will be clearer in the subsequent sections. $\square$
\end{remark}

\section{I.I.D. Process Model}
\label{sec:iidmodel}

We first consider a simple stochastic model in which a sequence of column vectors $\{ (X_{1i}, \cdots, X_{ki}) \}_{i=1}^n$ is independent and identically distributed (i.i.d.) over $i$. Notice that this model represents the extreme case in Remark~\ref{remark:distributedsetting}, to some extent, where the movie scores are completely irrelevant across different genres. Here the entries within each column are assumed to follow an arbitrary distribution.
 Obviously this model is very simplistic and hence it may be far from the statistics of a realistic matrix in practical applications. But the i.i.d. model turns out to lay the foundation to tackle more complicated yet practically-relevant cases. Moreover it can serve as a lower bound to the capacity for the general case in which a sequence of column vectors are possibly dependent. This will be clearer in Sections~\ref{sec:ergodicmodel} and~\ref{sec:noisychannel}.

\begin{theorem}[I.I.D. Model]
\label{thm:iid_model}
\begin{align}
\label{eq:iidcapacity}
C = \frac{ \sum_{\ell=1}^{k} H(X_{\ell}) }{ H(X_1, \cdots, X_k) }.
\end{align}
\end{theorem}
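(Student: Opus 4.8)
The plan is to sandwich the completion capacity between matching converse and achievability bounds by showing that reliable completion succeeds if and only if $p\sum_{\ell=1}^k H(X_\ell)\ge H(X_1,\ldots,X_k)$; since $C$ is the supremum of $1/p$ over achievable observation rates, this is equivalent to the asserted formula. For the converse I would argue by Fano's inequality: a vanishing error probability forces $I(M;Y_1^n,\ldots,Y_k^n)\ge H(M)-o(n)=nH(X_1,\ldots,X_k)-o(n)$, where the last equality uses that the $n$ columns are i.i.d. The one structural fact needed is that each output $Y_\ell^n$ depends on $M$ only through its own row $X_\ell^n$; expanding the mutual information by the chain rule over the $k$ rows and bounding each summand via $I(M;Y_\ell^n\mid Y_1^n,\ldots,Y_{\ell-1}^n)\le I(X_\ell^n;Y_\ell^n)$ --- valid because conditioning cannot increase entropy --- yields $I(M;Y_1^n,\ldots,Y_k^n)\le\sum_{\ell=1}^k I(X_\ell^n;Y_\ell^n)$. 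A one-line single-letter computation for the memoryless erasure channel gives $I(X_\ell^n;Y_\ell^n)=np\,H(X_\ell)$, so dividing by $n$ and letting $n\to\infty$ delivers the converse.

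For the achievability I would transplant the Slepian-Wolf decoder onto the translated distributed source-channel problem. After collecting $\{Y_\ell^n\}_{\ell=1}^k$, the decoder outputs the matrix $\hat{M}$ whose $n$ columns are jointly $\varepsilon$-typical under $p(x_1,\ldots,x_k)$ and that agrees with every non-erased observation. The AEP makes the genuine $M$ a valid output with probability tending to one, so the residual error is the event that some competing typical matrix also matches all observations. Since any competitor can differ from $M$ only on erased coordinates, a union bound weights each competitor by $(1-p)^{|D|}$, where $D$ is its difference set; regrouping the sum according to the joint type of the column pairs $(\hat{x}_i,x_i)$ and invoking the method of types recasts the expected number of competitors as a single exponential, whose exponent is a maximum --- over all couplings of two $p(x_1,\ldots,x_k)$-distributed columns --- of a conditional entropy penalized by $\log(1-p)$ times the expected Hamming distance between the coupled columns.

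The crux, and the step I expect to be hardest, is to certify that this exponent turns strictly negative exactly at the converse threshold $p\sum_\ell H(X_\ell)=H(X_1,\ldots,X_k)$. This is precisely where the erasure structure must be exploited: an observation reveals the true symbol rather than a random hash, so the admissible competing couplings are far more restricted than in a generic distributed-source setting, and the high-dimensional regime emphasized in the Introduction is what drives the measure of columns observed in too few coordinates to zero. Pinning down the maximizing coupling and matching its exponent to the converse threshold is the delicate type-counting computation that, once completed, closes the gap and establishes the capacity formula.
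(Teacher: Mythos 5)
Your converse is correct and is essentially the paper's own (Section III-B): Fano's inequality, the single-letter identity $I(X_\ell^n;Y_\ell^n)=np\,H(X_\ell)$ for the memoryless erasure channel acting on an i.i.d.\ row, and the decomposition $\sum_\ell I(X_\ell^n;Y_\ell^n)\ge I(X_1^n,\dots,X_k^n;Y_1^n,\dots,Y_k^n)$, which you obtain from the chain rule and the fact that each $Y_\ell^n$ depends on $M$ only through $X_\ell^n$ --- the same Markov structure the paper invokes in steps $(b)$--$(c)$. Your achievability, however, diverges from the paper's. The paper runs a Slepian--Wolf-style joint-typicality decoder and bounds the probability that a fixed competitor $\hat{x}_\ell^n\neq x_\ell^n$ is jointly typical with $y_\ell^n$ \emph{uniformly} by $2^{-n(I(X_\ell;Y_\ell)-\epsilon)}$, then counts competitors via $|A_\epsilon^{(n)}(X_\ell|x_{\ell'}^n)|\le 2^{n(H(X_\ell|X_{\ell'})+\epsilon)}$. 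You instead compute the exact survival probability $(1-p)^{|D|}$ of each consistent competitor and organize the union bound by the joint type $Q$ of the coupled columns, arriving at the exponent $\max_Q\bigl\{H_Q(\hat{X}\mid X)-\mathbb{E}_Q[d(\hat{X},X)]\log\tfrac{1}{1-p}\bigr\}$.

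The step you flag as the crux is a genuine gap, and in fact it cannot be closed as you propose: the exponent does \emph{not} turn negative at the converse threshold. Take the near-identity coupling $Q_\delta$ that resamples the column independently with probability $\delta$: then $H_{Q_\delta}(\hat{X}\mid X)=\Theta(\delta\log(1/\delta))$ while the erasure penalty is at most $k\delta\log\tfrac{1}{1-p}=\Theta(\delta)$, so for small $\delta$ the exponent is strictly positive at \emph{every} $p<1$ whenever the column law is nondegenerate. These survivors are not an artifact of the union bound. Concretely, for $X_1=\cdots=X_k$ uniform binary (where the theorem asserts $C=k$, i.e.\ $p^*=1/k$ suffices), each column is fully erased with probability $(1-p)^k$, so with probability tending to one some column has no observed entry; two typical matrices then agree with every observation, and no decoder achieves vanishing block-error probability at any fixed $p<1$ under the paper's exact-recovery criterion $\Pr(\hat{M}\neq M)\to 0$. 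Your distance-dependent calculation has thus actually localized the delicate point in the paper's own achievability as well: the uniform bound $2^{-n(I(X_\ell;Y_\ell)-\epsilon)}$ is a packing-lemma bound that presumes the competitor is independent of $y_\ell^n$ --- which random binning guarantees in Slepian--Wolf but nothing guarantees here, since there is no encoder --- and it is violated precisely by competitors at sublinear Hamming distance, i.e.\ your couplings $Q_\delta$. Rescuing the route would require changing the problem rather than sharpening the type count: e.g.\ weakening the error criterion to tolerate a vanishing fraction of wrong columns (which excludes the $\mathbb{E}_Q[d]=o(1)$ couplings from the union bound), or reintroducing some form of encoding; as it stands, your program, carried out honestly, yields a result contradicting the claimed formula rather than proving it.
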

\begin{proof}
See the following two subsections.
\end{proof}
\begin{remark}
For the case where the entries of a column vector are also i.i.d., i.e.,  $p(x_1,\cdots,x_k)=\prod_{\ell=1}p(x_{\ell})$, we get: $C =1$. This coincides with our intuition because in that case one observation cannot provide any information about others. For the other extreme case where $X_1=\cdots=X_k$, we have $C=k$. This result also makes sense since in that case one observation can reveal other $k-1$ entries in the same column. $\square$
\end{remark}

\subsection{Proof of Achievability}
\label{sec:iid_achievability}

Our proof bears a resemblance to that of the Slepian-Wolf theorem~\cite{SlepianWolf:it73}, developed in the context of a similar setting. The achievable scheme in~\cite{SlepianWolf:it73} employs random binning and \emph{joint typicality decoding}. Encoder $\ell$ independently partitions the space (that the sequence $X_{\ell}^n$ resides in) into multiple bins depending on a rate assigned. It then sends the bin index to which $X_{\ell}^n$ belongs. Upon receiving the bin indices from $k$ encoders, the decoder looks for a sequence tuple $(X_{1}^n, \cdots, X_{k}^n)$ such that their corresponding bin indices coincide with the received ones and the sequence tuple is jointly typical.

However, this scheme is not directly applicable to our case, since in our model 1) there is no explicit encoder; and 2) $X_{\ell}^n$'s are observed through erasure channels instead of through noiseless bit-pipes. These differences prevent employing the binning scheme. So the decoding rule needs to be modified accordingly. The key observation for the modification is that by the asymptotic equipartition property (AEP) of an i.i.d. process, $X_{\ell}^n$ is a typical sequence for sufficiently large $n$, and so is  $(X_{\ell}^n,Y_{\ell}^n)$~\cite{CoverThomas,ElGamalKim:NIT}. This leads us to consider the following decoding rule. For illustration purpose, we first focus on the case of $k=2$. The general case of an arbitrary value of $k$ will be dealt with later. Given $(y_1^n,y_2^n)$, the decoder tries to find a pair of sequences $(\hat{x}_1^n, \hat{x}_2^n)$ such that
  $(\hat{x}_1^n, y_1^n) \in A_{\epsilon}^{(n)} (X_1, Y_1)$, $(\hat{x}_2^n, y_2^n) \in A_{\epsilon}^{(n)} (X_2, Y_2)$ and $(\hat{x}_1^n, \hat{x}_2^n) \in A_{\epsilon}^{(n)} (X_1, X_2)$,
where $A_{\epsilon}^{(n)} (X_1, Y_1)$ denotes a joint typical set w.r.t. $X_1^n$ and $Y_1^n$ for some $\epsilon >0$.
Similarly $A_{\epsilon}^{(n)} (X_2, Y_2)$ and $A_{\epsilon}^{(n)} (X_1, X_2)$ are defined accordingly.
For notational simplicity, we will use a common notation $A_{\epsilon}^{(n)}$, which can be easily differentiated from contexts.

Now let us consider the probability of error. Using the union bound, we can upper-bound the probability of error as:
\begin{align*}
& \textrm{Pr} \left \{ (\hat{X}_1^n, \hat{X}_2^n) \neq (X_1^n, X_2^n) \right \} \leq \textrm{Pr} \left \{ (X_1^n, X_2^n) \notin A_{\epsilon}^{(n)} \right \} \\
&\quad + \sum_{(x_1^n,x_2^n) \in A_{\epsilon}^{(n)} } p(x_1^n,x_2^n) \left \{ \textrm{Pr} ( E_1) +   \textrm{Pr} ( E_2)+ \textrm{Pr} ( E_{12}) \right \}
\end{align*}
where $E_{\ell}$ indicates the event that $\hat{X}_{\ell}^n \neq x_{\ell}^n$, and $E_{12}$ denotes the event that $\hat{X}_1^n \neq x_1^n$ and $\hat{X}_2^n \neq x_2^n$. By AEP, $\textrm{Pr} \{ (X_1^n, X_2^n) \notin A_{\epsilon}^{(n)} \} \rightarrow 0$ as $n$ tends to infinity. Now consider
\begin{align*}
\textrm{Pr} (E_1) &= \sum_{ \hat{x}_1^n \neq x_1^n, \hat{x}_1^n \in A_{\epsilon}^{(n)} (X_1|x_2^n) } \textrm{Pr} \left \{ (\hat{x}_1^n, y_1^n) \in A_{\epsilon}^{(n)} \right \} \\
&\leq 2^{-n ( I(X_1; Y_1) -H(X_1|X_2) - 2 \epsilon) }
\end{align*}
where the inequality follows from $|A_{\epsilon}^{(n)}(X_1|x_2^n)| \leq 2^{n (H(X_1|X_2) + \epsilon)}$ for $x_2^n \in A_{\epsilon}^{(n)}$~\cite{CoverThomas,ElGamalKim:NIT}. Similarly we get:
$\textrm{Pr} (E_2) \leq 2^{-n (I(X_2; Y_2) - H(X_2|X_1) - 2 \epsilon) }$ and $\textrm{Pr} (E_{12}) \leq 2^{-n (I(X_1;Y_1) + I(X_2; Y_2) - H(X_1, X_2) - 3 \epsilon) }$.
Hence, the probability of error can be made arbitrarily close to zero (as $n$ tends to infinity) if
$I(X_1;Y_1) \geq H(X_1|X_2)$, $I(X_2;Y_2) \geq H(X_2|X_1)$, and $I(X_1;Y_1) + I(X_2;Y_2) \geq H(X_1,X_2)$.
Note that $I (X_{\ell}; Y_{\ell}) = H (X_{\ell}) - H(X_{\ell} | Y_{\ell}) =p H(X_{\ell})$. Applying this to the above condition, we get:
 $p \geq \max \left \{ \frac{H(X_1|X_2)}{H(X_1)}, \frac{H(X_2|X_1)}{H(X_2)}, \frac{H(X_1,X_2)}{H(X_1) + H(X_2)}  \right \}$.
Using the non-negativity of entropy and the fact that conditioning reduces entropy, we get $p \geq \frac{H(X_1,X_2)}{H(X_1) + H(X_2)}$. Since $R = \frac{1}{p}$, this condition becomes $R \leq \frac{H(X_1) + H(X_2)}{H(X_1, X_2)}$, which proves the achievablility for $k=2$.

Precisely similar arguments can be made for an arbitrary value of $k$. Specifically, given $(y_1^n, \cdots, y_k^n)$, the decoder looks for a tuple of $k$ sequences $(\hat{x}_1^n, \cdots, \hat{x}_k^n)$ such that $(\hat{x}_{\ell}^n, y_{\ell}^n) \in A_{\epsilon}^{(n)}, \forall \ell \in [1:k]$ and $(\hat{x}_{1}^n,\cdots, \hat{x}_k^n) \in A_{\epsilon}^{(n)}$. Then, the probability of error can be made arbitrarily close to zero if $\sum_{\ell \in {\cal S}} I (X_{\ell}; Y_{\ell}) \geq H(X_{\cal S}|X_{ {\cal S}^c} ), \forall {\cal S} \subseteq [1:k]$, where $X_{\cal S} = \{ X_{\ell} \}_{\ell \in {\cal S}}$. This yields the desired result:
 $R \leq \min_{{\cal S} \subseteq [1:k]} \frac{ \sum_{\ell \in {\cal S}} H(X_{\ell})  }{H(X_{\cal S}|X_{ {\cal S}^c} )} = \frac{ \sum_{\ell=1}^{k} H(X_{\ell})  }{ H(X_1,\cdots, X_{k}) }$.

\subsection{Proof of Converse}
\label{sec:iid_converse}

Using $I(X_{\ell}; Y_{\ell}) = p H(X_{\ell})$ and Fano's inequality
 $H( X_1^n, \cdots, X_{k}^n | Y_1^n, \cdots, Y_k^n) \leq n \epsilon_n$,
we get:
\begin{align*}
&\frac{ n \sum_{\ell =1}^{k} H(X_{\ell}) }{ R } = \sum n I(X_{\ell}; Y_{\ell}) \overset{(a)}{=} \sum I( X_{\ell}^n; Y_{\ell}^n ) \\
& \quad \overset{(b)}{\geq} H \left( Y_{1}^n, \cdots, Y_k^n \right) - \sum H( Y_{\ell}^n | X_{\ell}^n) \\
& \quad \overset{(c)}{=}  I \left( X_{1}^n, \cdots, X_k^n ; Y_{1}^n, \cdots, Y_k^n  \right) \\
& \quad \overset{(d)}{\geq}  n H(X_1, \cdots, X_k ) - n \epsilon_n
\end{align*}
where $(a)$ follows from the memoryless property of the channel and the i.i.d. assumption on $X_{\ell}^n$; $(b)$ follows from the fact that conditioning reduces entropy; $(c)$ follows from a Markov chain of $ \{ X_{j}^n, Y_{j}^n \}_{j \neq \ell} \rightarrow  X_{\ell}^n \rightarrow Y_{\ell}^n$; and $(d)$ follows from Fano's inequality. If $R$ is achievable, then $\epsilon_n \rightarrow 0$ as $n$ tends to infinity. Hence, we obtain the desired bound, as claimed.

\section{Stationary Ergodic Process Model}
\label{sec:ergodicmodel}

We extend the result for the i.i.d. case to a more general case in which a sequence of column vectors $\{(X_{1i},\cdots, X_{ki})\}_{i=1}^{\infty}$ is now a stationary ergodic process. As before, the entries within each  column are assumed to follow an arbitrary distribution.

\begin{theorem}[Stationary Ergodic Model]
\label{thm:ergodic_model}
\begin{align}
\label{eq:ergodiccapacity}
C = \frac{ \sum_{\ell=1}^k   \bar{H}( X_{\ell} ) -  \sum_{\ell=1}^k  a_{\ell} }{ \bar{H}( X_1, \cdots, X_k) - \sum_{\ell =1}^k a_{\ell} }
\end{align}
where
$\bar{H}(X_{\ell}):=\lim_{n \rightarrow \infty} \frac{1}{n} H (X_{\ell}^n)$ indicates the entropy rate, $\bar{H}(X_1, \cdots, X_k) :=$ $\lim_{n \rightarrow \infty}$ $\frac{1}{n} H (X_1^n, \cdots, X_k^n)$,
 and
$a_{\ell} :=$ $\lim_{n \rightarrow \infty}$ $\frac{1}{n} \sum_{i=1}^n I ( Y_{\ell, (i+1)}^{n}; X_{\ell i} | X_{\ell}^{i-1})$.
\end{theorem}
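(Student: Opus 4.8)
The plan is to reduce Theorem~\ref{thm:ergodic_model} to Theorem~\ref{thm:iid_model} by reusing the entire argument structure of Section~\ref{sec:iidmodel} and making exactly two substitutions. First, wherever the i.i.d. proof invokes the AEP, I would instead invoke the Shannon--McMillan--Breiman theorem for stationary ergodic processes, so that all per-symbol entropies $H(\cdot)$ become entropy rates $\bar H(\cdot)$; here one uses that each row $\{X_{\ell i}\}_i$ is a factor of the stationary ergodic vector process and that $(X_\ell^n,Y_\ell^n)$, being that factor randomized by independent erasures, is itself stationary ergodic. Second, the single-letter identity $I(X_\ell;Y_\ell)=pH(X_\ell)$, which the i.i.d. proof leans on in both directions, is no longer available because the source has memory; the crux of the whole argument is therefore the mutual-information-rate identity (writing $\bar I(X_\ell;Y_\ell):=\lim_{n\to\infty}\frac{1}{n} I(X_\ell^n;Y_\ell^n)$)
\begin{align}
\bar I(X_\ell;Y_\ell)=p\,\bar H(X_\ell)+(1-p)\,a_\ell, \notag
\end{align}
which I would isolate and prove first. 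Note it reduces to $p\bar H(X_\ell)$ when $a_\ell=0$, i.e.\ the i.i.d. case, since there the future observations are independent of the present symbol given the past.

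For achievability I would keep the joint-typicality decoder verbatim: given $(y_1^n,\dots,y_k^n)$, look for a tuple $(\hat x_1^n,\dots,\hat x_k^n)$ with $(\hat x_\ell^n,y_\ell^n)\in A_\epsilon^{(n)}$ for all $\ell$ and $(\hat x_1^n,\dots,\hat x_k^n)\in A_\epsilon^{(n)}$, where the typical sets are now those of the stationary ergodic processes $(X_\ell^n,Y_\ell^n)$ and $(X_1^n,\dots,X_k^n)$. The error event $E_{\mathcal S}$ in which exactly the rows in $\mathcal S$ are decoded incorrectly is controlled by counting candidate tuples jointly typical with $x_{\mathcal S^c}^n$ (at most $2^{n(\bar H(X_{\mathcal S}\mid X_{\mathcal S^c})+\epsilon)}$ of them) against the probability $2^{-n\bar I(X_\ell;Y_\ell)}$ that each passes its typicality test, yielding the sufficient condition $\sum_{\ell\in\mathcal S}\bar I(X_\ell;Y_\ell)\ge \bar H(X_{\mathcal S}\mid X_{\mathcal S^c})$ for every $\mathcal S\subseteq[1:k]$. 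As in the $k=2$ computation of Section~\ref{sec:iid_achievability}, submodularity of entropy should make $\mathcal S=[1:k]$ the binding constraint, so it suffices to require $\sum_{\ell}\bar I(X_\ell;Y_\ell)\ge \bar H(X_1,\dots,X_k)$.

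For the converse I would replay the Fano chain of Section~\ref{sec:iid_converse} but without single-letterizing the left-hand side: steps $(b)$--$(d)$ give directly $\sum_\ell I(X_\ell^n;Y_\ell^n)\ge H(X_1^n,\dots,X_k^n)-n\epsilon_n$, since subadditivity of $H(Y_\ell^n)$, the across-row Markov structure $\{X_j^n,Y_j^n\}_{j\ne\ell}\to X_\ell^n\to Y_\ell^n$, and Fano's inequality all survive unchanged. Dividing by $n$ and letting $n\to\infty$ yields $\sum_\ell \bar I(X_\ell;Y_\ell)\ge \bar H(X_1,\dots,X_k)$, matching the achievability condition, so the two directions meet once the lemma is substituted.

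It then remains to prove the lemma and substitute. For the lemma I would expand $I(X_\ell^n;Y_\ell^n)=\sum_{i=1}^n I(X_{\ell i};Y_\ell^n\mid X_\ell^{i-1})$, drop $Y_\ell^{i-1}$ from each term (it is conditionally independent of $X_{\ell i}$ given $X_\ell^{i-1}$ by the memoryless erasure channel), and condition on the source-independent erasure indicator at position $i$: with probability $p$ the output equals $X_{\ell i}$ and the term is $H(X_{\ell i}\mid X_\ell^{i-1})$, while with probability $1-p$ the output is an erasure and the term is $I(X_{\ell i};Y_{\ell,(i+1)}^n\mid X_\ell^{i-1})$, giving
\begin{align}
I(X_{\ell i};Y_\ell^n\mid X_\ell^{i-1})=p\,H(X_{\ell i}\mid X_\ell^{i-1})+(1-p)\,I(X_{\ell i};Y_{\ell,(i+1)}^n\mid X_\ell^{i-1}). \notag
\end{align}
Summing, dividing by $n$, and taking limits produces the lemma, as the first group averages to $\bar H(X_\ell)$ and the second is exactly $a_\ell$. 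Substituting into $\sum_\ell \bar I(X_\ell;Y_\ell)\ge \bar H(X_1,\dots,X_k)$ and collecting the $a_\ell$ terms rearranges to $p\ge \frac{\bar H(X_1,\dots,X_k)-\sum_\ell a_\ell}{\sum_\ell \bar H(X_\ell)-\sum_\ell a_\ell}$, i.e.\ $R=1/p\le C$ with $C$ as claimed. I expect the main obstacle to be not this algebra but the analytic underpinnings of the ergodic generalization: ensuring the limits defining $\bar H(\cdot)$ and especially $a_\ell$ exist, that the conditional-typical-set cardinality bound holds with the \emph{conditional entropy rate} as exponent for a general stationary ergodic source, and that $\mathcal S=[1:k]$ stays binding once the $a_\ell$ terms, which are not proportional to $\bar H(X_\ell)$, enter the per-subset conditions.
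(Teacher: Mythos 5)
Your proposal follows the paper's proof essentially verbatim: the paper likewise invokes the Shannon--McMillan--Breiman theorem to carry the i.i.d.\ joint-typicality decoder and the Fano-based converse (without single-letterizing) over to the ergodic setting, and your key lemma $\bar I(X_\ell;Y_\ell)=p\,\bar H(X_\ell)+(1-p)\,a_\ell$ is exactly the paper's computation~\eqref{eq:IRcomp_relation}, derived by the same device of dropping $Y_\ell^{i-1}$ via the Markov chain $Y_{\ell}^{i-1} \rightarrow X_{\ell}^{i-1} \rightarrow X_{\ell i}$ and splitting on the erasure at position $i$. The analytic caveats you flag at the end are treated no more rigorously in the paper itself, which asserts the existence of $a_\ell$ from stationarity and dispatches the per-subset conditions with the one-line claim that $\bar H(X_\ell)-a_\ell\ge 0$ together with ``conditioning reduces entropy'' makes $\mathcal{S}=[1:k]$ the binding constraint.
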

\begin{proof}
See the following two subsections.
\end{proof}
\begin{remark}
Unlike the i.i.d. case, here we have $a_{\ell}$. One can interpret this quantity as the information that the present provides about the future noisy signals knowing the past. So this can be alternatively expressed via directed information~\cite{Massey:ITA}. In the stationary ergodic case, correlation between entries can make $a_{\ell}$ strictly positive, and this contributes to increasing completion capacity. This result makes sense since with more correlation, obviously more entries can be revealed per observation. Also note that this theorem includes the result for the i.i.d. model as a special case. Specializing to the i.i.d. model, we get: $a_{\ell}=0$, $\bar{H}( X_{\ell}) = H(X_{\ell})$, $\forall \ell$ and $\bar{H}(X_1,\cdots, X_{k}) = H(X_1,\cdots, X_k)$, thus yielding~\eqref{eq:iidcapacity}. In general, $a_{\ell} \geq 0$, $\bar{H}(X_{\ell}) \leq H( X_{\ell})$, and $\bar{H}(X_{1}, \cdots, X_{k}) \leq H( X_{1},\cdots, X_{k})$. From this, we see that the capacity in the i.i.d. case serves as a lower bound, as claimed earlier. $\square$
\end{remark}


\subsection{Proof of Achievability}
\label{sec:ergodic_achievability}
The proof relies primarily on the Shannon-McMillan-Breiman theorem~\cite{Shannon:IT,McMillan:stat,Breiman:stat}, which showed that for a stationary ergodic process, the AEP holds as in the i.i.d. case:
$- \frac{1}{n} \log p(X_1^n, \cdots, X_k^n) \rightarrow \bar{H}(  X_1, \cdots, X_k )$ and $- \frac{1}{n} \log p(X_{\ell}^n) \rightarrow \bar{H}( X_{\ell} )$
with probability 1. Using the stationarity of $\{X_{\ell i}\}_{i=1}^{\infty}$ and the memoryless property of the channel, one can show that $\{ X_{\ell i}, Y_{\ell i} \}_{i=1}^{\infty}$ is also stationary and ergodic $\forall \ell \in [1:k]$. Moreover, $\{Y_{\ell i} \}_{i=1}^{\infty}$ is necessarily stationary and ergodic since it is a projection of $\{ X_{\ell i}, Y_{\ell i} \}_{i=1}^{\infty}$. Hence, the AEP holds for these random processes as well.
This naturally motivates us to apply the same decoding rule that we used in the i.i.d. case. Given $(y_1^n, \cdots, y_k^n)$, the decoder looks for a unique tuple of $k$ sequences $(\hat{x}_1^n, \cdots, \hat{x}_k^n)$ such that $(\hat{x}_{\ell}^n, y_{\ell}^n) \in \bar{A}_{\epsilon}^{(n)}, \ell \in [1:k]$ and $(\hat{x}_{1}^n,\cdots, \hat{x}_k^n) \in
\bar{A}_{\epsilon}^{(n)}$. Here $\bar{A}_{\epsilon}^{(n)}$ indicates a joint typical set w.r.t. some stationary ergodic processes.
One can then readily show that the probability of error goes to zero if
$\sum_{\ell \in {\cal S}} \bar{I}( X_{\ell}; Y_{\ell}) \geq \bar{H}(  X_{\cal S}| X_{{\cal S}^c}), \forall {\cal S} \subseteq [1:k]$,
where $\bar{I}(X_{\ell}; Y_{\ell}):=\lim_{n \rightarrow \infty} \frac{1}{n} I (X_{\ell}^n; Y_{\ell}^n)$ denotes the mutual information rate.

Unlike the i.i.d. case, in the stationary ergodic case, $\bar{I}(X_{\ell}; Y_{\ell}) \neq p \bar{H} (X_{\ell})$ in general. Actually we find the relation between $p = \frac{1}{R}$ and $\bar{I}( X_{\ell}; Y_{\ell})$ from the following computation:
\begin{align}
\begin{split}
\label{eq:IRcomp_relation}
I &(X_{\ell}^n; Y_{\ell}^n) = H(X_{\ell}^n) - \sum H(X_{\ell i} | X_{\ell}^{i-1}, Y_{\ell}^n ) \\
& = H(X_{\ell}^n) - \sum H(X_{\ell i} | X_{\ell}^{i-1}, Y_{\ell i}, Y_{\ell, (i+1)}^n) \\
&=  H(X_{\ell}^n) - (1-p) \sum H(X_{\ell i} | X_{\ell}^{i-1}, Y_{\ell, (i+1)}^n) \\
&=  p H(X_{\ell}^n)  + (1-p) \sum I (X_{\ell i}; Y_{\ell, (i+1)}^n | X_{\ell}^{i-1}) \\
\end{split}
\end{align}
where the second equality is due to a Markov chain of $Y_{\ell}^{i-1} \rightarrow X_{\ell}^{i-1} \rightarrow X_{\ell i}$. Notice that $a_{\ell}:= \lim_{n \rightarrow \infty} \frac{1}{n} \sum I (X_{\ell i}; Y_{\ell, (i+1)}^n | X_{\ell}^{i-1})$ \emph{exists} due to the stationarity of $(X_{\ell}^n, Y_{\ell}^n)$ and $X_{\ell}^n$. Using the above, we then get:
 $R \leq \min_{{\cal S} \subseteq [1:k]} \frac{ \sum_{\ell \in {\cal S}} \left \{ \bar{H}(X_{\ell}) - a_{\ell} \right \}  }{ \bar{H}( X_{\cal S}| X_{ {\cal S}^c} ) - \sum_{\ell \in {\cal S}} a_{\ell} } $.
 Using $\bar{H}(X_{\ell})- a_{\ell} \geq 0$ and the fact that conditioning reduces entropy, we finally obtain the desired result.

\subsection{Proof of Converse}
\label{sec:ergodic_converse}

Using~\eqref{eq:IRcomp_relation} and making similar arguments as in the i.i.d. case, we get:
\begin{align*}
&\frac{ \sum_{\ell =1}^{k} \{ \bar{H}(X_{\ell}) - a_{\ell} \} }{ R } + \sum a_{\ell} = \sum \bar{I} ( X_{\ell}; Y_{\ell}) \\
& \quad \geq \lim_{n \rightarrow \infty } \frac{1}{n} \left[ H \left( Y_{1}^n, \cdots, Y_k^n \right) - \sum H( Y_{\ell}^n | X_{\ell}^n) \right ]\\
& \quad =  \lim_{n \rightarrow \infty} \frac{1}{n} I \left( X_{1}^n, \cdots, X_k^n ; Y_{1}^n, \cdots, Y_k^n  \right) \\
& \quad \geq  \bar{H} ( X_1, \cdots, X_k ) - \lim_{n \rightarrow \infty} \epsilon_n.
\end{align*}
This yields the desired bound.


\section{Generalization}
\label{sec:noisychannel}

We extend the noiseless observation case to the noisy case illustrated in Fig.~\ref{fig:noisychannel}. We model the noisy observation via a discrete memoryless channel described by a conditional probability distribution. The noisy version $Y_{\ell}^{n}$ then passes through an erasure channel to produce $Z_{\ell}^n$, $\forall \ell$. We will first characterize the completion capacities for the i.i.d. model and for the stationary ergodic model. We will then derive an upper bound for an arbitrary stochastic matrix.


\begin{figure}[t]
\begin{center}
{\epsfig{figure=./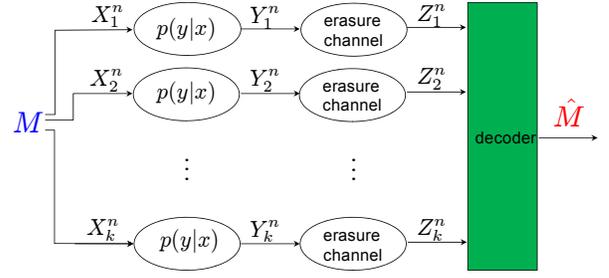, angle=0, width=0.43\textwidth}}
\end{center}
\caption{Noisy observation process.}
\label{fig:noisychannel}
\end{figure}

\begin{theorem}[Noisy Entries]
\label{thm:noisy_model}
For the i.i.d. process model,
\begin{align}
C = \frac{ \sum_{\ell=1}^{k} I(X_{\ell}; Y_{\ell}) }{ H(X_1, \cdots, X_k) }.
\end{align}
For the stationary ergodic process model,
\begin{align}
C = \frac{ \sum_{\ell=1}^{k} \bar{I}(X_{\ell}; Y_{\ell}) - \sum_{\ell=1}^{k} b_{\ell} }{ \bar{H}(X_1, \cdots, X_k) - \sum_{\ell=1}^{k} b_{\ell}}
\end{align}
where
$b_{\ell } := \lim_{n \rightarrow \infty} \frac{1}{n} \sum_{i=1}^n I(Y_{\ell i}; Z_{\ell, (i+1)}^n | Y_{ \ell }^{i-1})$.
\end{theorem}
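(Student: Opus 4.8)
The plan is to reduce both claims to the noiseless machinery of Theorems~\ref{thm:iid_model} and~\ref{thm:ergodic_model} by replacing the erasure channel $X_\ell \to Y_\ell$ used there with the \emph{composite} channel $X_\ell \to Y_\ell \to Z_\ell$, where $X_\ell \to Y_\ell$ is the discrete memoryless channel and $Y_\ell \to Z_\ell$ is the erasure channel. The decoder still aims to reconstruct the source $X$, but now observes the doubly-corrupted outputs $Z$; accordingly I would run exactly the joint-typicality decoder of the noiseless proofs with $Z_\ell$ in place of $Y_\ell$. The entire achievability/converse scaffolding (the union bound over subsets $\mathcal S$, Fano's inequality, and the per-row Markov chain $\{X_j^n,Z_j^n\}_{j\neq\ell}\to X_\ell^n\to Z_\ell^n$ coming from independence of the channels across rows) then transfers verbatim, so the only quantity that must be recomputed is the information that one observation of $Z_\ell$ conveys about $X_\ell$.

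For the i.i.d. model, the crux is the single-letter identity $I(X_\ell;Z_\ell)=p\,I(X_\ell;Y_\ell)$, which I would prove by conditioning on the (source-independent) erasure indicator: with probability $p$ the output equals $Y_\ell$ and contributes $I(X_\ell;Y_\ell)$, and with probability $1-p$ it is an erasure carrying no information. Substituting this into the achievability condition $\sum_{\ell\in\mathcal S} I(X_\ell;Z_\ell)\ge H(X_{\mathcal S}\mid X_{\mathcal S^c})$ and into the Fano chain of Section~\ref{sec:iid_converse} reproduces, after the same ``conditioning reduces entropy'' reduction used there, the stated capacity $C=\frac{\sum_\ell I(X_\ell;Y_\ell)}{H(X_1,\dots,X_k)}$. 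This is literally Theorem~\ref{thm:iid_model} with each clean term $H(X_\ell)$ upgraded to the noisy term $I(X_\ell;Y_\ell)$.

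For the stationary ergodic model the single-letter identity becomes the mutual-information-rate identity $\bar I(X_\ell;Z_\ell)=p\,\bar I(X_\ell;Y_\ell)+(1-p)\,b_\ell$, the exact analogue of~\eqref{eq:IRcomp_relation}. I would derive it by first noting that $\{X_{\ell i},Y_{\ell i},Z_{\ell i}\}$ is stationary ergodic (being a memoryless, noise-augmented function of the stationary ergodic source), so Shannon--McMillan--Breiman applies and the limit $b_\ell$ exists. Then, using the Markov chain $X_\ell^n\to Y_\ell^n\to Z_\ell^n$, I would write $I(X_\ell^n;Z_\ell^n)=I(Y_\ell^n;Z_\ell^n)-I(Y_\ell^n;Z_\ell^n\mid X_\ell^n)$, apply~\eqref{eq:IRcomp_relation} to the erasure channel $Y_\ell\to Z_\ell$ to expand the first term as $p\,H(Y_\ell^n)+(1-p)\sum_i I(Y_{\ell i};Z_{\ell,(i+1)}^n\mid Y_\ell^{i-1})$, and evaluate the correction term. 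Plugging the resulting $\bar I(X_\ell;Z_\ell)$ into the ergodic achievability condition $\sum_{\ell\in\mathcal S}\bar I(X_\ell;Z_\ell)\ge \bar H(X_{\mathcal S}\mid X_{\mathcal S^c})$ and into the converse of Section~\ref{sec:ergodic_converse} yields the claimed formula, just as the substitution $\bar I(X_\ell;Y_\ell)=p\,\bar H(X_\ell)+(1-p)a_\ell$ yielded Theorem~\ref{thm:ergodic_model}.

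The main obstacle is the evaluation of the correction term $I(Y_\ell^n;Z_\ell^n\mid X_\ell^n)$. Here I would exploit that, conditioned on $X_\ell^n$, the pairs $(Y_{\ell i},Z_{\ell i})$ are independent across $i$ (memorylessness of both the DMC and the erasure), so the term factorizes as $\sum_i I(Y_{\ell i};Z_{\ell i}\mid X_{\ell i})$. A short computation of each summand shows that the binary erasure-entropy contributions $H_2(p)$ appearing in $H(Z_{\ell i}\mid X_{\ell i})$ and $H(Z_{\ell i}\mid Y_{\ell i})$ cancel exactly, leaving $I(Y_{\ell i};Z_{\ell i}\mid X_{\ell i})=p\,H(Y_{\ell i}\mid X_{\ell i})$ and hence $I(Y_\ell^n;Z_\ell^n\mid X_\ell^n)=p\,H(Y_\ell^n\mid X_\ell^n)$. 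Subtracting this from the expansion above and recombining $p[H(Y_\ell^n)-H(Y_\ell^n\mid X_\ell^n)]=p\,I(X_\ell^n;Y_\ell^n)$ produces the identity; the delicate points are precisely this cancellation and the justification that every limit ($\bar I$, $b_\ell$, and the Fano term $\epsilon_n$) exists and commutes with the subset manipulations, both of which rest on stationarity.
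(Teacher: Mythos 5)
Your proposal is correct, and its overall scaffolding is exactly the paper's: run the same joint-typicality decoder with $Z_\ell$ in place of $Y_\ell$, keep the achievability conditions $\sum_{\ell\in\mathcal S} I(X_\ell;Z_\ell)\ge H(X_{\mathcal S}\mid X_{\mathcal S^c})$ and the Fano-based converse, and reduce everything to a mutual-information identity for the composite channel (your i.i.d.\ step $I(X_\ell;Z_\ell)=p\,I(X_\ell;Y_\ell)$ via the erasure indicator matches the paper's computation). Where you genuinely diverge is in deriving the key ergodic identity~\eqref{eq:IXZ_computation}: the paper writes $I(X_\ell^n;Z_\ell^n)=I(X_\ell^n;Y_\ell^n)-I(X_\ell^n;Y_\ell^n\mid Z_\ell^n)$, expands the conditional term over $i$, and identifies $I(X_\ell^n;Y_{\ell i}\mid Y_\ell^{i-1})-I(X_\ell^n;Y_{\ell i}\mid Y_\ell^{i-1},Z_{\ell,(i+1)}^n)=I(Y_{\ell i};Z_{\ell,(i+1)}^n\mid Y_\ell^{i-1})$ via the Markov chain $(Y_\ell^{i-1},Z_{\ell,(i+1)}^n)\rightarrow X_\ell^n\rightarrow Y_{\ell i}$; you instead write $I(X_\ell^n;Z_\ell^n)=I(Y_\ell^n;Z_\ell^n)-I(Y_\ell^n;Z_\ell^n\mid X_\ell^n)$, recycle the noiseless identity~\eqref{eq:IRcomp_relation} for the inner erasure channel $Y_\ell\rightarrow Z_\ell$ (legitimate, since that derivation uses only memoryless erasures independent of the input, not any property of the input process), and single-letterize the correction term to $p\,H(Y_\ell^n\mid X_\ell^n)$ --- your $H_2(p)$ cancellation is right, and the factorization $I(Y_\ell^n;Z_\ell^n\mid X_\ell^n)=\sum_i I(Y_{\ell i};Z_{\ell i}\mid X_{\ell i})$ holds by conditional independence of the pairs across $i$ given $X_\ell^n$. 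Both routes land on the identical identity $I(X_\ell^n;Z_\ell^n)=p\,I(X_\ell^n;Y_\ell^n)+(1-p)\sum_i I(Y_{\ell i};Z_{\ell,(i+1)}^n\mid Y_\ell^{i-1})$, your $p\,\bar I(X_\ell;Y_\ell)+(1-p)b_\ell$ being the paper's $p(\bar I(X_\ell;Y_\ell)-b_\ell)+b_\ell$ rewritten. As for what each buys: the paper's decomposition is channel-agnostic (pure chain-rule and Markov-chain manipulations, no entropy computations for the DMC) and is the form reused verbatim in Theorem~\ref{thm:upperbound}; yours makes the provenance of $b_\ell$ transparent --- it is literally the $a_\ell$-term of an erasure channel whose input process is $Y_\ell$ --- at the cost of a short DMC-specific calculation. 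Your remaining justifications (stationarity/ergodicity of the triple process for SMB typicality and the existence of $b_\ell$, and the cross-row Markov chain for the converse) match the paper's own, equally terse, treatment.
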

\begin{remark}
Observe that the entropy and the entropy rate that appeared in the noiseless case (\eqref{eq:iidcapacity} and~\eqref{eq:ergodiccapacity}) are now replaced by the mutual information and the mutual information rate respectively. This clearly shows the reduction in capacity due to noises. $\square$
\end{remark}
\begin{proof}
The proof for the i.i.d. case is straightforward. The key observation here is that $(X_{\ell}^n,Z_{\ell}^n)$ is also i.i.d. This leads us to apply the same decoding rule as before. One can then easily obtain the following condition for achievability:
$\sum_{\ell \in {\cal S}} I( X_{\ell}; Z_{\ell}) \geq H(  X_{\cal S}| X_{{\cal S}^c}), \forall {\cal S} \subseteq [1:k]$.
We now consider:
$I (X_{\ell}; Z_{\ell}) = H(X_{\ell}) - (1-p) H(X_{\ell}) - p H(X_{\ell}|Y_{\ell}) = p I(X_{\ell}; Y_{\ell})$.
Applying this to the above condition, we can readily prove the achievability.
In the i.i.d. case, $I(X_{\ell}^n; Z_{\ell}^n)=p I(X_{\ell}^n; Y_{\ell}^n)$. So we get: $\frac{ \sum_{\ell =1}^{k} I(X_{\ell}^n; Y_{\ell}^n) }{ R}  = \sum I ( X_{\ell}^n; Z_{\ell}^n) \geq  nH ( X_1, \cdots, X_k ) - n\epsilon_n$. This establishes the converse proof.

Unlike the i.i.d. case, for the stationary ergodic model, $I(X_{\ell}^n; Z_{\ell}^n) \neq p I(X_{\ell}^n; Y_{\ell}^n)$ in general. Instead we get:
\begin{align}
\begin{split}
\label{eq:IXZ_computation}
&I (X_{\ell}^n; Z_{\ell}^n) = I(X_{\ell}^n; Y_{\ell}^n)  - I (X_{\ell}^n ; Y_{\ell}^n | Z_{\ell}^n) \\
&=  I(X_{\ell}^n; Y_{\ell}^n)  -\sum (1-p) I (X_{\ell}^n ; Y_{\ell i} | Y_{\ell}^{i-1}, Z_{\ell, (i+1)}^n) \\
&=  p I(X_{\ell}^n; Y_{\ell}^n)  + \sum (1-p)  \cdot \\
&\qquad \;\; \{ I (X_{\ell}^n ; Y_{\ell i} | Y_{\ell}^{i-1}) - I (X_{\ell}^n ; Y_{\ell i} | Y_{\ell}^{i-1}, Z_{\ell, (i+1)}^n) \} \\
&=  p I(X_{\ell}^n; Y_{\ell}^n) + \sum (1-p) I (Y_{\ell i}; Z_{\ell, (i+1)}^n | Y_{\ell}^{i-1})
\end{split}
\end{align}
where the first step follows from $X_{\ell}^n \rightarrow Y_{\ell}^n \rightarrow Z_{\ell}^n$;
 and the last step is due to $(Y_{\ell}^{i-1}, Z_{\ell, (i+1)}^n) \rightarrow  X_{\ell}^n \rightarrow Y_{\ell i}$. From~\eqref{eq:IXZ_computation}, we get:
$\bar{I}(X_{\ell}; Z_{\ell}) = p ( \bar{I} (X_{\ell}; Y_{\ell} ) - b_{\ell}) + b_{\ell}$.
Using this and making the same arguments as before, one can readily prove the achievability and the converse.
\end{proof}

\begin{theorem}[Arbitrary Stochastic Matrix]
\label{thm:upperbound}
\begin{align}
C \leq \limsup_{n \rightarrow \infty} \frac{ \sum_{\ell} I(X_{\ell}^n; Y_{\ell}^n) - \sum_{\ell} b_{\ell} (n) }{ H(X_1^n, \cdots, X_k^n) -  \sum_{\ell} (b_{\ell}(n) - c_{\ell} (n) )},
\end{align}
where
$b_{\ell}(n) := \sum_{i=1}^n I(Y_{\ell i}; Z_{\ell, (i+1)}^n | Y_{ \ell }^{i-1})$ and $c_{\ell}(n) := I(Z_{\ell}^n; Z_{1}^n, \cdots, Z_{\ell -1}^n)$.
\end{theorem}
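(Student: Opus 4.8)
The plan is to run a Fano-type converse directly at block length $n$, since for an arbitrary stochastic matrix we can neither single-letterize nor pass to entropy rates as in Sections~\ref{sec:iid_converse} and~\ref{sec:ergodic_converse}. Suppose a completion rate $R=1/p$ is achievable. Since the decoder must recover $(X_1^n, \cdots, X_k^n)$ from $(Z_1^n, \cdots, Z_k^n)$, Fano's inequality gives $H(X_1^n, \cdots, X_k^n \,|\, Z_1^n, \cdots, Z_k^n) \leq n \epsilon_n$ with $\epsilon_n \rightarrow 0$, hence
\[
I(X_1^n, \cdots, X_k^n; Z_1^n, \cdots, Z_k^n) \geq H(X_1^n, \cdots, X_k^n) - n \epsilon_n.
\]
The goal is to upper-bound the joint mutual information on the left by a sum of per-row quantities, and then convert that sum into an expression in $R$ matching the claim.

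First I would reuse the per-row erasure identity. The computation in~\eqref{eq:IXZ_computation} invokes only the memoryless erasure channel together with the Markov chains $X_\ell^n \rightarrow Y_\ell^n \rightarrow Z_\ell^n$ and $(Y_\ell^{i-1}, Z_{\ell,(i+1)}^n) \rightarrow X_\ell^n \rightarrow Y_{\ell i}$; none of these requires stationarity or the i.i.d. assumption, so $I(X_\ell^n; Z_\ell^n) = p\, I(X_\ell^n; Y_\ell^n) + (1-p)\, b_\ell(n)$ holds \emph{verbatim} for an arbitrary matrix. Subtracting $b_\ell(n)$, summing over $\ell$, and using $p=1/R$ yields
\[
\sum_\ell I(X_\ell^n; Z_\ell^n) - \sum_\ell b_\ell(n) = \frac{1}{R}\left[ \sum_\ell I(X_\ell^n; Y_\ell^n) - \sum_\ell b_\ell(n) \right],
\]
which already produces the numerator of the claimed bound.

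Next I would relate $\sum_\ell I(X_\ell^n; Z_\ell^n)$ to the joint quantity. Because the noise and the erasure act independently across rows, $Z_1^n, \cdots, Z_k^n$ are conditionally independent given the whole matrix, so the Markov chain $\{X_j^n, Z_j^n\}_{j \neq \ell} \rightarrow X_\ell^n \rightarrow Z_\ell^n$ gives $H(Z_1^n, \cdots, Z_k^n \,|\, X_1^n, \cdots, X_k^n) = \sum_\ell H(Z_\ell^n \,|\, X_\ell^n)$. For the unconditioned term I would apply the chain rule together with the definition of $c_\ell(n)$, namely $H(Z_1^n, \cdots, Z_k^n) = \sum_\ell H(Z_\ell^n) - \sum_\ell c_\ell(n)$, which is an \emph{equality} rather than the subadditivity inequality used before. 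Subtracting the two decompositions yields the clean identity $\sum_\ell I(X_\ell^n; Z_\ell^n) = I(X_1^n, \cdots, X_k^n; Z_1^n, \cdots, Z_k^n) + \sum_\ell c_\ell(n)$. Substituting this and the Fano bound into the previous display gives
\[
\frac{1}{R}\left[ \sum_\ell I(X_\ell^n; Y_\ell^n) - \sum_\ell b_\ell(n) \right] \geq H(X_1^n, \cdots, X_k^n) - \sum_\ell \bigl( b_\ell(n) - c_\ell(n) \bigr) - n \epsilon_n,
\]
and solving for $R$ followed by $\limsup_{n \rightarrow \infty}$ (so that $\epsilon_n \rightarrow 0$) delivers the stated bound.

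The main obstacle I anticipate is twofold. Conceptually, it is recognizing that the cross-row correlation, which was silently discarded by the subadditivity step $\sum_\ell H(\cdot) \geq H(\cdot, \cdots, \cdot)$ in the earlier converses, must now be tracked \emph{exactly} through the correction term $c_\ell(n)$; retaining this as an equality rather than throwing away the slack is precisely what forces the denominator to be $H(X_1^n, \cdots, X_k^n) - \sum_\ell (b_\ell(n) - c_\ell(n))$. Technically, I would need to justify the final division: rearranging into $R \leq (\text{ratio})$ requires the numerator $\sum_\ell I(X_\ell^n; Y_\ell^n) - \sum_\ell b_\ell(n)$ to be nonnegative (this follows from the per-term inequality $I(X_\ell^n; Y_{\ell i} \,|\, Y_\ell^{i-1}) \geq I(Y_{\ell i}; Z_{\ell,(i+1)}^n \,|\, Y_\ell^{i-1})$ implicit in~\eqref{eq:IXZ_computation}, mirroring $\bar{H}(X_\ell) - a_\ell \geq 0$ from the ergodic case) and the denominator to stay positive, after which the $n \epsilon_n$ term is harmless under the $\limsup$ once all quantities are normalized by $n$.
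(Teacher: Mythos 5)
Your proposal is correct and takes essentially the same route as the paper's proof: both start from the identity~\eqref{eq:IXZ_computation} (which, as you observe, requires only the memoryless channels and not stationarity), replace the subadditivity step of the earlier converses by the exact chain-rule decomposition $H(Z_1^n,\cdots,Z_k^n)=\sum_{\ell}H(Z_{\ell}^n)-\sum_{\ell}c_{\ell}(n)$, use conditional independence of the rows' outputs given their inputs plus Fano's inequality, and then rearrange and take the $\limsup$. Your closing verification that $\sum_{\ell}I(X_{\ell}^n;Y_{\ell}^n)-\sum_{\ell}b_{\ell}(n)\geq 0$ (needed to legitimize solving for $R$) is a detail the paper leaves implicit, but it does not constitute a different approach.
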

\begin{proof}
Starting with~\eqref{eq:IXZ_computation}, we get:
\begin{align*}
&\frac{ \sum \{ I(X_{\ell}^n; Y_{\ell}^n) - b_{\ell} (n) \} }{ R} + \sum b_{\ell} (n) = \sum I ( X_{\ell}^n; Z_{\ell}^n) \\
& \quad = \left[ H \left( Z_{1}^n, \cdots, Z_k^n \right) - \sum H( Z_{\ell}^n | X_{\ell}^n) \right ] + \sum c_{\ell} (n) \\
& \quad =  I \left( X_{1}^n, \cdots, X_k^n ; Z_{1}^n, \cdots, Z_k^n  \right) + \sum c_{\ell} (n) \\
& \quad \geq  H ( X_1^n, \cdots, X_k^n ) + \sum c_{\ell} (n) - n \epsilon_n.
\end{align*}
This completes the proof.
\end{proof}

\section{Conclusion}
\label{sec:conclusion}



We established the completion capacity for a class of stochastic matrices, and developed an upper bound for arbitrary matrices. Our results can be used in evaluating different reconstruction algorithms, as well as provide the limits that an optimal algorithm can achieve. This paper comes with some limitations. The stationary ergodic model for the unknown matrix mainly considered herein may be far from the statistics of a realistic matrix in some important applications. Moreover, our reconstruction algorithm is based on joint typicality decoding which faces challenge in implementation. Hence future works of interest would be (1) characterizing the completion capacity for an arbitrary stochastic matrix; (2) developing an optimal low-complexity algorithm that achieves the limits possibly without knowing the statistics of the matrix.

%
%
%
%

\section*{Acknowledgment}
The author would like to thank Prof. Jinwoo Shin for discussions in the early stage of this work.

\bibliographystyle{ieeetr}
\bibliography{bib_MC}

\end{document}